\let\sum\relax
\DeclareSymbolFont{CMlargesymbols}{OMX}{cmex}{m}{n}
\DeclareMathSymbol{\sum}{\mathop}{CMlargesymbols}{"50}
\newtheorem{theorem}{Theorem}
\newtheorem{prop}{Proposition}
\newtheorem{lem}{Lemma}
\newtheorem{cor}{Corollary}
\theoremstyle{definition}
\newtheorem{as}{Assumption}
\begin{document}

	\title{
	A General Impossibility Theorem on Pareto Efficiency and Bayesian Incentive Compatibility\footnote{
	We are thankful to 
Mat\'{i}as N\'{u}\~{n}ez
for the fruitful
discussions and useful comments.
Financial supports by Investissements d'Avenir, ANR-11-IDEX-0003/Labex Ecodec/ANR-11-LABX-0047 
and
PHC Sakura program, project number 45153XK,
are gratefully acknowledged.
}
 
	}
	
\author{
	Kazuya Kikuchi\thanks{
	{Tokyo University of Foreign Studies. E-mail: \texttt{kazuya.kikuchi68@gmail.com}}.}
	\and
	Yukio Koriyama\thanks{
	CREST, Ecole Polytechnique, Institut Polytechnique de Paris. 
	E-mail: \texttt{yukio.koriyama@polytechnique.edu}.}}  
	
	\date{This version: March, 2023}
	
	\maketitle
	
	\begin{abstract}
This paper studies a general class of social choice problems in which agents' payoff functions (or types) are privately observable random variables, and monetary transfers are not available. We consider cardinal social choice functions which may respond to agents' preference intensities as well as preference rankings. We show that a social choice function is ex ante Pareto efficient and Bayesian incentive compatible if and only if it is dictatorial. The result holds for arbitrary numbers of agents and alternatives, and under a fairly weak assumption on the joint distribution of types, which allows for arbitrary correlations and asymmetries. 
	\end{abstract}

\section{Introduction}

	We consider a general class of social choice problems in which agents’ payoffs are privately observed random variables and monetary transfer is not available. Payoffs are drawn from a continuous space, so that the aggregation of the cardinal preferences is at stake. We prove an impossibility theorem: a social choice function is ex ante Pareto efficient and Bayesian incentive compatible if and only if it is dictatorial, under fairly general conditions.

    Our main contribution lies in the generality of the conditions under which the impossibility holds. We provide a proof of the theorem for an arbitrary number of agents and alternatives, under a mild assumption about the joint distribution of preference types. In particular, our theorem does not hinge on either independence or symmetry assumptions.
    
	We would like to emphasize that allowing for correlation is relevant in many applications. In voting in general, it is common for voters' preferences for each alternative to be correlated. Our theorem provides insight not only for normative analysis in which assumptions such as independence or symmetry of the preferences are imposed, but also for positive analysis in which presence of correlations in the joint distribution is relevant. 
 
	Our proof proceeds by highlighting the incentive for agents to exaggerate the intensity of their preferences on the best and the worst alternatives, which we call \textit{extremization}. 
    Our result suggests that the incentive that agents are willing to send an extreme message about their best and the worst alternatives is robust, and the idea that such a motivation would lead to an impossibility theorem is formalized in the proof.
 
    Impossibility results have been obtained under various conditions in the literature \citep{BORGERS20092057,EHLERS202031}. However, to the best of our knowledge, existing models are either limited with respect to the number of agents, the number of alternatives, or the conditions on the joint distribution of preferences. We discuss our contribution to the literature in more detail below. 

    \subsection{Related literature}

Our result concerns the efficiency and incentive properties of \textit{cardinal} social choice rules in a Bayesian environment without monetary transfers. Cardinal rules take into account the preference intensities of agents, unlike \textit{ordinal} rules which are based only on the preference rankings.

Previous studies on this topic have already demonstrated the tradeoff between ex ante efficiency and incentives under cardinal rules. 
The result closest to ours is \citet[
	Theorem 8]{EHLERS202031} which shows that if each agent's belief about the other agents' types is independent of his own type, then any nondictatorial weighted utilitarian rule is not Bayesian incentive compatible. 
\cite{BORGERS20092057} derives the same conclusion in a setting with two agents who have opposite preference rankings over three alternatives, and whose von-Neumann Morgenstern (vNM) utilities from the middle-ranked alternative are independent and identically distributed random variables. Our motivation in this paper is to establish the impossibility result under a minimum assumption. We only assume that the joint distribution of agents' types has a density with full support. In particular, we allow for arbitrary correlations of types.

For ordinal rules, the well-known Gibbard-Satterthwaite theorem (\citealt{gibbard1973manipulation} and \citealt{satterthwaite1975strategy}) shows that strategy-proofness implies dictatorship, if there are at least three alternatives. \cite{majumdar2004ordinally} considers the concept of ordinal Bayesian incentive compatibility (OBIC) which requires that truth-telling is a best response for each agent with respect to all vNM utility functions consistent with the agent's ordinal ranking. They show that with three or more alternatives, there is a generic set of independent belief systems for which an ordinal rule is OBIC only if it is dictatorial, demonstrating the robustness of the Gibbard-Satterthwaite theorem. However, for a certain class of positively correlated belief systems,
\cite{bhargava2015incentive} provides a sufficient condition (much weaker than dictatorship) for an ordinal rule to be locally robustly OBIC, in the sense that the rule is OBIC in some neighborhood of the given belief system.

Other works in the literature investigate optimal rules among cardinal rules, subject to incentive constraints. \cite{AzrieliKim2014} considers the case with two alternatives and independent types, and show that ex ante efficient rules and interim efficient rules subject to Bayesian incentive compatibility are characterized by certain classes of (ordinal) weighted majority rules. \cite{schmitz2012sub} also studies the case with two alternatives. They assume that the distribution of types is symmetric with respect to both the alternatives and agents, but allow the types to be correlated. Then some weak majority rule, in which the two alternatives are chosen with equal probabilities unless either alternative obtains a sufficiently large support, is optimal among strategy-proof rules. Moreover, the standard majority rule is optimal among Bayesian incentive compatible rules if types are independent, but there can be some superior rule if types are correlated.

These results suggest that an incentive compatible rule cannot effectively utilize the cardinal information of preferences. Indeed, \cite{EHLERS202031} shows that if a social choice rule satisfies some continuity condition in addition to Bayesian incentive compatibility, then it must be vNM-ordinal in expectation, meaning that the rule responds to a change in an agent's type only if his expected-utility ranking of alternatives changes.
\cite{kim2017ordinal} shows that there is a Bayesian incentive compatible vNM-ordinal rule that achieves a higher utilitarian welfare than any ordinal rule.

Finally, \cite{jackson2007overcoming} shows that for a general Bayesian collective decision problem, any ex ante Pareto efficient social choice rule can be approximately implemented by the linking mechanism, defined on a set of independent copies of the original decision problem.

	\section{The impossibility theorem}
	
	Let $I$ be the set of agents with $|I|=n\geq2$, and $X$ the set of alternatives with $|X|=m\geq2$, both finite. If alternative $x\in X$ is chosen, agent $i\in I$ obtains a random payoff $U_i^x$ taking values in $[0,1]$. The payoff vector $U_i:=(U_i^x)_{x\in X}\in[0,1]^m$ is called the \textit{type} of the agent. We assume the following:
	\begin{as}
    	The joint distribution of the type profile $U:=(U_i)_{i\in I}$ is absolutely continuous (with respect to Lebesgue measure) and has full support over $[0,1]^{mn}$.
	\label{as:dist}
	\end{as}

	A \textit{social choice function} (SCF) is a Borel measurable function $f:[0,1]^{mn}\to\Delta(X)$, where $\Delta(X):=\{(p^x)_{x\in X}\in[0,1]^m|\sum p^x=1\}$ is the set of probability distributions on $X$. Given the realization $u$ of the type profile $U$, the SCF chooses an alternative according to the distribution $f(u)$. For convenience, we denote by $\tilde{f}(u)\in X$ the (possibly random) alternative chosen by $f$ when the type profile is $u$. The expected payoff for agent $i$ under the SCF $f$ is then \[
    \pi_i(f):=\mathbf{E}\left[U_i^{\tilde{f}(U)}\right]. 
	\]

	An SCF $f$ is \textit{ex ante Pareto efficient} if there is no SCF $g$ such that for all agents $i$, $\pi_i(g)\geq\pi_i(f)$, with strict inequality for some $i$. An SCF $f$ is \textit{dictatorial} if there is an agent $i$ such that $U_i^{\tilde{f}(U)}=\max_{x}U_i^x$ almost surely.

	The \textit{direct mechanism} for an SCF $f$ is the Bayesian game played by the $n$ agents in which: each agent's type is private information; a \textit{strategy} for agent $i$ is a Borel measurable function $\sigma_i=(\sigma_i^x(\cdot))_{x\in X}:[0,1]^m\to[0,1]^m$, under which the agent reports that his type is $\sigma_i(u_i)=(\sigma_i^x(u_i))_{x\in X}\in[0,1]^m$ if his true type is $u_i$; given the profile of reported types $\sigma(u)=(\sigma_i(u_i))_{i\in I}$, the outcome of the mechanism is the (possibly random) alternative $\tilde{f}(\sigma(u))$, and the payoff for agent $i$ is $u_i^{\tilde{f}(\sigma(u))}$. The expected payoff for agent $i$ under the strategy profile $\sigma$ is 
    \[
    \pi_i^f(\sigma):=\pi_i(f\circ\sigma)=\mathbf{E}\left[U_i^{\tilde{f}(\sigma(U))}\right].
    \]
    We also introduce the notation 
 \[
 \pi_i^f(v_i,\sigma_{-i}|u_i):=\mathbf{E}\left[u_i^{\tilde{f}(v_i,\sigma_{-i}(U_{-i}))}\Big|U_i=u_i\right]
 \]
 for the conditional expected payoff given the true type $u_i$, if the agent reports that his type is $v_i$ while the other agents play the strategies $\sigma_{-i}=(\sigma_j)_{j\neq i}$.

	In the direct mechanism, a strategy profile $\sigma=(\sigma_i)_{i\in I}$ is a \textit{Bayesian Nash equilibrium} (BNE) if $\pi_i^f(\sigma)\geq \pi_i^f(\sigma_i^\prime,\sigma_{-i})$ for all agents $i$ and all strategies $\sigma_i^\prime$. An equivalent definition is the following: $\sigma$ is a BNE if $\pi_i^f(u_i,\sigma_{-i}|u_i)\geq \pi_i^f(v_i,\sigma_{-i}|u_i)$ for all agents $i$, almost all types $u_i\in[0,1]^m$ and all $v_i\in[0,1]^m$. We denote by $\tau_i$ the \textit{truth-telling strategy} for agent $i$: $\tau_i(u_i)\equiv u_i$. The SCF $f$ is \textit{Bayesian incentive compatible} if the profile of truth-telling strategies $\tau=(\tau_i)_{i\in I}$ is a BNE.
	
	\begin{theorem}
	A social choice function is ex ante Pareto efficient and Bayesian incentive compatible if and only if it is dictatorial.\footnote{A restricted version of Theorem 1 for the case with two alternatives has appeared as Proposition 3 in the working-paper version of \cite{kikuchi2022winner}, available at https://arxiv.org/abs/2206.09574.}
	\label{thm:main}
	\end{theorem}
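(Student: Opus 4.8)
I would prove the two directions separately; the forward one is routine, and the converse carries the argument.

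\medskip
\noindent\emph{Dictatorial $\Rightarrow$ ex ante Pareto efficient and Bayesian incentive compatible.} If $i$ is the dictator, then $U_i(f(U))=\max_x U_i(x)$ almost surely, so $\pi_i(f)=\mathbf{E}[\max_x U_i(x)]$, which is the largest value $\pi_i$ can take under any SCF. Hence any SCF $g$ with $\pi_i(g)\geq\pi_i(f)$ must also put probability one on $\arg\max_x U_i(x)$ a.s., so $g$ coincides with $f$ a.s.\ and yields the same payoff profile; thus no $g$ ex ante Pareto dominates $f$. Incentive compatibility is equally direct: the dictator already obtains his maximal payoff by reporting truthfully, and any other agent's report leaves the chosen alternative unchanged (a.s.), so no deviation is profitable.

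\medskip
\noindent\emph{Ex ante Pareto efficient and Bayesian incentive compatible $\Rightarrow$ dictatorial; Step 1 (efficiency forces a weighted utilitarian rule).} The attainable payoff set $\{(\pi_i(g))_{i\in I}:g\text{ an SCF}\}\subseteq[0,1]^n$ is convex, since pointwise mixtures of SCFs are SCFs whose payoff profiles mix linearly. As $(\pi_i(f))_i$ is not strictly dominated inside this set, a supporting-hyperplane argument yields weights $\lambda=(\lambda_i)_i\geq 0$ with $\sum_i\lambda_i=1$ such that $\sum_i\lambda_i\pi_i(g)\leq\sum_i\lambda_i\pi_i(f)$ for every SCF $g$. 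Writing $\sum_i\lambda_i\pi_i(g)=\mathbf{E}\big[\sum_x g(U)(x)\sum_i\lambda_i U_i(x)\big]$ and maximizing the integrand state by state, the maximizer must put probability one on $\arg\max_x\sum_i\lambda_i U_i(x)$, which by Assumption~\ref{as:dist} is a.s.\ a singleton $m_\lambda(U)$ (each tie locus $\{\sum_i\lambda_i U_i(x)=\sum_i\lambda_i U_i(y)\}$ is a proper hyperplane in $[0,1]^{mn}$, hence null, because $\lambda\neq 0$). Therefore $f(u)$ puts probability one on $m_\lambda(u)$ for a.e.\ $u$: up to a null set, $f$ is the weighted utilitarian rule with weights $\lambda$.

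\medskip
\noindent\emph{Step 2 (incentive compatibility forces $\lambda$ to be a unit vector).} Suppose to the contrary that $0<\lambda_i<1$ for some $i$. The mechanism I would exploit is \emph{extremization}: agent $i$, whose true type is $u_i$ with favorite alternative $a$ and worst alternative $b$, reports a type that is strictly higher than $u_i$ on $a$ and strictly lower on $b$, leaving the remaining coordinates essentially unchanged. Against any profile $u_{-i}$ of the others this can only raise the weighted score of $a$ and lower that of $b$, so it changes the chosen alternative only from some alternative to $a$, or from $b$ to some alternative---in either case weakly improving agent $i$'s payoff; the deviation is thus always weakly profitable. It is strictly profitable with positive probability because, for a positive-measure set of types $u_i$, the event
\[
\{\,u_{-i}\ :\ b\ \text{wins under}\ (u_i,u_{-i})\ \text{but}\ a\ \text{wins once}\ i\ \text{extremizes}\,\}
\]
is a nonempty \emph{open} subset of the type space---it is cut out by finitely many strict inequalities among weighted scores---and hence has positive probability by the full-support part of Assumption~\ref{as:dist}; on that event agent $i$ strictly gains. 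This contradicts Bayesian incentive compatibility (modulo the measure-theoretic point addressed below), so $\lambda=e_i$ for some $i$, whence $f(u)$ puts probability one on $\arg\max_x u_i(x)$ for a.e.\ $u$, i.e.\ $f$ is dictatorial.

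\medskip
\noindent\emph{Main obstacle.} The delicate part is making Step 2 rigorous, for two reasons. First, the extremized report is a particular function of $u_i$ whose values fill only a Lebesgue-null set of reports, on which the a.e.\ identity of Step 1 says nothing about $f$; I would handle this by perturbing the extremized report by a tiny absolutely continuous noise---so that the induced report profile is again absolutely continuous on $[0,1]^{mn}$ and avoids the null exceptional set---applying incentive compatibility to this randomized deviation, and then letting the noise shrink (its influence on the chosen alternative occurs on a set of vanishing probability, while the extremization gain is bounded away from zero). Second, one must produce, uniformly over the relevant range of $u_i$, a profile $u_{-i}$ lying in the open event above: concretely, one picks $\sum_{k\neq i}\lambda_k u_k(b)$ large enough that the others' support for $b$ overcomes agent $i$'s dislike under truthful reporting, yet small enough that boosting $a$ via extremization overturns the outcome. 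That explicit construction, together with the passage from ``nonempty open'' to ``positive probability'', is precisely where full support of the type distribution is indispensable.
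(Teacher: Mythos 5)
Your proposal is correct and follows essentially the same route as the paper: a supporting-hyperplane argument on the convex set of attainable payoff profiles reduces ex ante efficiency to weighted utilitarianism, and an \emph{extremization} deviation—whose strict profitability rests on the same interiority condition (the payoff gap between the two relevant alternatives being less than $\sum_{j\neq i}\lambda_j/\lambda_i$, combined with full support)—rules out Bayesian incentive compatibility for any nondictatorial weight vector. The only differences are cosmetic: the paper extremizes only the top alternative and shows the choice probability of a near-top alternative strictly falls, whereas you also lower the worst alternative and argue via a positive-probability ``flip'' event, and you are in fact more explicit than the paper about the measure-zero issue of evaluating $f$ at the deviating report.
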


	\section{Proof of the theorem}
	
	To prove Theorem \ref{thm:main}, we first characterize ex ante Pareto efficient SCFs as weighted utilitarian SCFs (Proposition \ref{prop:eff}). We then show that the only weighted utilitarian SCFs that are Bayesian incentive compatible are dictatorial SCFs (Proposition \ref{prop:bic}).

	\bigskip

    Let $\Delta(I):=\{(\lambda_i)_{i\in I}\in[0,1]^n|\sum\lambda_i=1\}$ be the set of weight assignments to the agents, with sum normalized to one.
	We call an SCF $f$ \textit{weighted utilitarian} if
	there is $\lambda\in\Delta(I)$ such that $\sum_i\lambda_iU_i^{\tilde{f}(U)}=\max_{x}\sum_i \lambda_iU_i^x$ almost surely. Note that a dictatorial SCF is a weighted utilitarian SCF that gives the whole weight (1) to one agent.
	
	\begin{prop}
	A social choice function is ex ante Pareto efficient if and only if it is weighted utilitarian.
	\label{prop:eff}
	\end{prop}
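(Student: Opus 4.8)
The plan is to prove the two implications separately, with the ``if'' direction a short pointwise-optimality argument and the ``only if'' direction resting on a separating-hyperplane argument in the space of attainable ex ante payoff vectors.

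For the ``if'' direction, suppose $f$ is weighted utilitarian with weights $\lambda=(\lambda_i)_{i\in I}$, and suppose toward a contradiction that some SCF $g$ Pareto-dominates $f$. Since for a.e.\ $u$ the lottery $f(u)$ is supported on $\arg\max_x\sum_i\lambda_iu_i(x)$, we have the pointwise inequality $\sum_i\lambda_iu_i(f(u))\ge\sum_i\lambda_iu_i(g(u))$; taking expectations gives $\sum_i\lambda_i\pi_i(f)\ge\sum_i\lambda_i\pi_i(g)$. But Pareto domination gives $\pi_i(g)\ge\pi_i(f)$ for every $i$, hence $\sum_i\lambda_i\pi_i(g)\ge\sum_i\lambda_i\pi_i(f)$, so the two weighted sums are equal and the pointwise inequality must be an equality for a.e.\ $u$. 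By Assumption~\ref{as:dist} the maximizer of $x\mapsto\sum_i\lambda_iu_i(x)$ is almost surely unique: for $x\neq y$ the set $\{u:\sum_i\lambda_iu_i(x)=\sum_i\lambda_iu_i(y)\}$ is Lebesgue-null because $\lambda_i>0$ for some $i$, hence null under the type distribution. Therefore $f(u)$ and $g(u)$ both equal the point mass on that unique maximizer for a.e.\ $u$, so $g=f$ almost everywhere and $\pi(g)=\pi(f)$, contradicting the strict inequality. This is exactly where Assumption~\ref{as:dist} is needed, to rule out spurious Pareto improvements when some weights vanish.

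For the converse, I work with the set of attainable ex ante payoff vectors $V:=\{(\pi_i(g))_{i\in I}:g\text{ an SCF}\}\subseteq[0,1]^n$. Since pointwise mixtures $\alpha g+(1-\alpha)h$ of SCFs are SCFs and $\pi_i(\alpha g+(1-\alpha)h)=\alpha\pi_i(g)+(1-\alpha)\pi_i(h)$ by linearity of expectation, $V$ is convex. Let $f$ be ex ante Pareto efficient. If $\pi(f)$ were an interior point of the comprehensive hull $K:=V-\mathbb{R}^n_+$, then $\pi(f)+\varepsilon\mathbf{1}\in K$ for small $\varepsilon>0$, yielding an SCF whose ex ante payoff exceeds $\pi_i(f)$ for every $i$ --- impossible. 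So $\pi(f)$ lies on the boundary of the convex set $K$, which has nonempty interior, and the supporting-hyperplane theorem provides a nonzero $\lambda$ with $\lambda\cdot\pi(f)\ge\lambda\cdot z$ for all $z\in K$. Evaluating at $z=\pi(f)-e_i\in K$ gives $\lambda_i\ge0$, so after rescaling we may take $\sum_i\lambda_i=1$; and since $V\subseteq K$, $f$ maximizes $g\mapsto\lambda\cdot\pi(g)=\mathbf{E}\big[\sum_xg(U)(x)\sum_i\lambda_iU_i(x)\big]$ over all SCFs. To convert this into the pointwise claim, note that $\lambda\cdot\pi(g)\le\mathbf{E}\big[\max_x\sum_i\lambda_iU_i(x)\big]$ for every SCF $g$, with the bound attained by the Borel SCF putting unit mass on, say, the lowest-indexed maximizer of $x\mapsto\sum_i\lambda_iu_i(x)$; hence $\mathbf{E}\big[\sum_i\lambda_iU_i(f(U))\big]=\mathbf{E}\big[\max_x\sum_i\lambda_iU_i(x)\big]$, and since the left-hand integrand never exceeds the right-hand one they coincide a.e., i.e.\ $f(U)\in\arg\max_x\sum_i\lambda_iU_i(x)$ almost surely.

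I expect the converse to be the main obstacle: one must identify the right convex set and verify its convexity, argue that Pareto efficiency forces $\pi(f)$ to the boundary, extract a separating functional and check that it is nonnegative so that it is a legitimate weight vector, and finally pass from ``$f$ maximizes the weighted sum of ex ante payoffs'' back to ``$f$ selects a weighted-utilitarian maximizer almost surely'' via a routine measurable-selection argument. The only other technical point, used in the ``if'' direction, is the almost-sure uniqueness of the weighted-utilitarian maximizer, which is precisely where the full-support density assumption enters.
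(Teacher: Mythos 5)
Your proof is correct and follows essentially the same route as the paper's: convexity of the set of attainable ex ante payoff profiles plus a supporting-hyperplane argument for the ``only if'' direction (the paper imports the nonnegative supporting weights from Proposition 16.E.2 of Mas-Colell, Whinston and Green, which you re-derive via the free-disposal hull), and the almost-sure uniqueness of the $\lambda$-weighted maximizer under Assumption~\ref{as:dist} for the ``if'' direction. The only organizational difference is that the paper runs the ``if'' direction through uniqueness of the solution to the linear program over payoff profiles, whereas you argue directly that any Pareto-dominating SCF must coincide with $f$ almost everywhere; the two arguments rest on the same fact and are equivalent.
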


	\begin{proof}
    Given an SCF $f$, denote by $\pi(f):=(\pi_i(f))_{i\in I}\in[0,1]^n$ the profile of expected payoffs.
	Let $F$ be the set of all SCFs, and $\pi(F)=\{\pi(f)|f\in F\}$ the set of all expected payoff profiles induced by SCFs. The set $\pi(F)$ is a convex subset of $[0,1]^n$, since $F$ permits SCFs that randomize over the alternatives.
	
	\textit{``Only if'' part.} Let $f$ be an ex ante Pareto efficient SCF. Since $\pi(F)$ is a convex set in $[0,1]^n$, by \citet[
	Proposition 16.E.2]{MWG1995}, there is $\lambda\in\Delta(I)$ such that the expected payoff profile $p=\pi(f)$ is a solution to the following maximization problem:
	\begin{equation}
	\max_{p\in \pi(F)}\lambda\cdot p.
	\label{eq:max_p}
	\end{equation}
	Thus, $f$ must be a solution to the maximization problem:
	\begin{equation}
	\max_{f\in F}\lambda\cdot \pi(f).
	\label{eq:max_f}
	\end{equation}
	By linearity of expectation, the objective function in (\ref{eq:max_f}) equals $\mathbf{E}\left[\sum_i\lambda_iU_i^{\tilde{f}(U)}\right]$, which is maximized if and only if $f$ is a $\lambda$-weighted utilitarian SCF. This proves the ``only if'' part. 
	
	\textit{Uniqueness of the solution to (\ref{eq:max_p}).}
	The above argument shows that if $p$ is a solution of (\ref{eq:max_p}), then $p=\pi(f)$ for some $\lambda$-weighted utilitarian $f$. There are multiple $\lambda$-weighted utilitarian SCFs, as there can be multiple alternatives $x$ maximizing the $\lambda$-weighted utilitarian sum $\sum_i\lambda_iU_i^x$. However, absolute continuity implies that such ties occur with probability 0. This implies that all $\lambda$-weighted utilitarian SCFs coincide almost surely, and hence induce the same expected payoff vector $\pi(f)$. Therefore, the solution $p$ of (\ref{eq:max_p}) is unique.
	
	\textit{``If'' part.} Suppose on the contrary that there is $\lambda\in\Delta(I)$ for which a $\lambda$-weighted utilitarian SCF $f$ is not ex ante Pareto efficient. Let $p=\pi(f)$ be the expected payoff profile under this SCF. There is a Pareto-dominating expected payoff profile $q\in\pi(F)$ with $q\neq p$ such that $q_i\geq p_i$ for all $i$. Then $\lambda\cdot q\geq\lambda\cdot p$, which contradicts the fact that $p$ is the unique solution to (\ref{eq:max_p}).
	\end{proof}
	
	We next show that the dictatorial SCFs are the only weighted utilitarian SCFs that are Bayesian incentive compatible.
	To do this, we first prove Lemma \ref{lem:dom} below, which shows that under a nondictatorial weighted utilitarian SCF, agents have a strategic incentive to report the extreme payoff values (0 and 1) for their worst and best alternatives, even when the true payoffs from those alternatives are intermediate values.

	Formally, we define \textit{extremization} as the function $\hat{(\cdot)}$ that transforms each strategy $\sigma_i$ into the strategy $\hat{\sigma}_i$ given by,
	for each $u_i\in[0,1]^m$ and $x\in X$,
	\[
	\hat{\sigma}_i^x(u_i):=\begin{cases}
	0 & \text{if $u_i^x=\min_{y} u_i^y<\max_{y}u_i^y$}\\
	1 & \text{if $u_i^x=\max_{y} u_i^y>\min_{y}u_i^y$}\\
	\sigma_i(u_i) & \text{otherwise.}
	\end{cases}
	\]
	We also need the following definitions.
	We say that two strategies $\sigma_i$ and $\sigma_i^\prime$ are \textit{equivalent} if $\sigma_i(U_i)=\sigma_i^\prime(U_i)$ almost surely. 
	A strategy subprofile $\sigma_{-i}=(\sigma_j)_{j\neq i}$ for the agents except $i$ is called \textit{regular} if the joint distribution of reported types $(\sigma_j(U_j))_{j\neq i}$ has full support over $[0,1]^{(n-1)m}$ and is absolutely continuous. For example, the truth-telling strategy subprofile $\tau_{-i}$ is regular since $(\tau_j(U_j))_{j\neq i}=(U_j)_{j\neq i}$ has full support and is absolutely continuous by Assumption \ref{as:dist}. 
	Weak dominance of strategies is defined in the usual way: strategy $\sigma_i$ \textit{weakly dominates} strategy $\sigma_i^\prime$ if  $\pi_i(\sigma_i,\sigma_{-i})\geq\pi_i(\sigma_i^\prime,\sigma_{-i})$ for all $\sigma_{-i}$, and $\pi_i(\sigma_i,\sigma_{-i})>\pi_i(\sigma_i^\prime,\sigma_{-i})$ for some $\sigma_{-i}$.

	\begin{lem} 
	Let $f$ be a nondictatorial weighted utilitarian SCF with weights $(\lambda_i)_{i\in I}$. Then there is at least one agent $i$ with $0<\lambda_i\leq1/2$. For each such agent $i$, every strategy $\sigma_i$ that is not equivalent to its extremization $\hat{\sigma}_i$ is weakly dominated by $\hat{\sigma}_i$; specifically, $\pi_i^f(\hat{\sigma}_i,\sigma_{-i})\geq\pi_i^f(\sigma_i,\sigma_{-i})$ for all $\sigma_{-i}$, and $\pi_i^f(\hat{\sigma}_i,\sigma_{-i})>\pi_i^f(\sigma_i,\sigma_{-i})$ for all regular $\sigma_{-i}$. 
	\label{lem:dom}
	\end{lem}

	\begin{proof}
    Since the SCF $f$ is nondictatorial, there are at least two agents with positive weights, one of which cannot exceed one-half of the total weight (1). Thus there is an agent $i$ satisfying $0<\lambda_i\leq1/2$. Fix such an agent $i$ and a strategy $\sigma_i$ not equivalent to its extremization $\hat{\sigma}_i$.

    Clearly, under-reporting the payoff from the worst alternative or over-reporting the payoff from the best alternative never makes the agent worse off. Thus, given any type $u_i$ and any strategy subprofile $\sigma_{-i}$, if $a$ and $b$ are the best and worst alternatives for the type $u_i$, the conditional expected payoff $\pi_i^f(v_i,\sigma_{-i}|u_i)$ is nondecreasing in $v_i^a\in[0,1]$ and nonincreasing in $v_i^b\in[0,1]$. In particular,
	\begin{equation}
	    \pi_i^f(\hat{\sigma}_i(u_i),\sigma_{-i}|u_i)\geq
	    \pi_i^f({\sigma}_i(u_i),\sigma_{-i}|u_i)\text{ for all $u_i$ and $\sigma_{-i}$}.
	    \label{eq:ineq}
	\end{equation}
	This implies the weak inequality of unconditional expected payoffs: $\pi_i^f(\hat{\sigma}_i,\sigma_{-i})\geq\pi_i^f({\sigma}_i,\sigma_{-i})$ for all $\sigma_{-i}$.
    It remains to show that the strict inequality $\pi_i^f(\hat{\sigma}_i,\sigma_{-i})>\pi_i^f({\sigma}_i,\sigma_{-i})$ holds for all regular $\sigma_{-i}$.

	Since $\sigma_i$ is not equivalent to its extremization $\hat{\sigma}_i$, there are two alternatives $a,b\in X$ for which there is a positive probability that $a$ and $b$ are respectively the best and worst alternatives for agent $i$, and the reported payoff of either $a$ or $b$ is not extreme. That is, $\mathbf{P}\{U_i\in E\cup F\}>0$, where
	\begin{equation*}
    \begin{split}
	E&:=\{u_i\in[0,1]^m|u_i^a=\textstyle\max_xu_i^x>\min_xu_i^x=u_i^b\text{ and }\sigma_i^b(u_i)>0\},\\
    F&:=\{u_i\in[0,1]^m|u_i^a=\textstyle\max_xu_i^x>\min_xu_i^x=u_i^b\text{ and }\sigma_i^a(u_i)<1\}.
    \end{split}
	\end{equation*}
    We have either (i) $\mathbf{P}\{U_i\in E\}>0$ or (ii) $\mathbf{P}\{U_i\in F\}>0$. We assume that (i) holds. The proof for the case when (ii) holds is similar, and hence omitted.

 To show the strict inequality part of the lemma, it suffices to prove the following claim:
	
	\bigskip
	\noindent
	\textbf{Claim A.}
	\textit{$\pi_i^f(\hat{\sigma}_i(u_i),\sigma_{-i}|u_i)>\pi^f_i(\sigma_i(u_i),\sigma_{-i}|u_i)$ for all $u_i\in E$ and regular $\sigma_{-i}$.}
	
	\bigskip
	\noindent
	If we can show Claim A, then, combined with (\ref{eq:ineq}), we will have proved that $\pi_i^f(\hat{\sigma}_i,\sigma_{-i})>\pi_i^f({\sigma}_i,\sigma_{-i})$ for all regular $\sigma_{-i}$, as required.

	To prove Claim A, fix a type $u_i\in E$ and a regular strategy subprofile $\sigma_{-i}$.
	We first give a formula of the conditional expected payoff.
	For each alternative $x\in X$, denote by 
	\[
	S^x:=\sum_{j\neq i}\lambda_j\sigma_j^x(U_j)
	\]
	the (random) weighted utilitarian sum induced by $x$, taken over the agents excluding $i$, based on the reported types.
	Agent $i$'s conditional expected payoff given his true type $u_i$ if he reports $v_i\in[0,1]^m$ can be written as:
	\begin{equation}
	    \pi_i^f(v_i,\sigma_{-i}|u_i)=\sum_{x\in X}u_i^x\mathbf{P}\{
	    \lambda_iv_i^x+S^x=\textstyle\max_{y}\lambda_iv_i^y+S^y|U_i=u_i\}.
	    \label{eq:pi0}
	\end{equation}
    We already know that (\ref{eq:pi0}) is nondecreasing in $v_i^a\in[0,1]$. It thus suffices to show that (\ref{eq:pi0}) is strictly decreasing in $v_i^b\in[0,1]$.

	The probabilities in the sum (\ref{eq:pi0}) add up to 1, since the regularity of $\sigma_{-i}$ implies that ties almost never occur in the ranking of the variables $\lambda_iv_i^x+S^x$, $x\in X$. We normalize (\ref{eq:pi0}) by
	subtracting $u_i^{b}$:
	\begin{equation}
	  \pi_i^f(v_i,\sigma_{-i}|u_i)-u_i^b=\sum_{x\neq b}(u_i^x-u_i^b)\mathbf{P}\{
	    \lambda_iv_i^x+S^x=\textstyle\max_{y}\lambda_iv_i^y+S^y|U_i=u_i\}.
	    \label{eq:pi}
	\end{equation}
	Clearly, each of the probabilities in the sum (\ref{eq:pi}) is nonincreasing in $v_i^b\in[0,1]$. Since  $u_i\in E$, we have $u_i^x\geq u_i^{b}$ for all $x$, with strict inequality for $x=a$. Thus, to show that the sum (\ref{eq:pi}) is strictly decreasing in $v_i^b$, it suffices to show that the probability in the $a$-indexed term of the sum,
	\begin{equation}
	\mathbf{P}\{
	    \lambda_i+S^a=\textstyle\max_{y}\lambda_iv_i^y+S^y|U_i=u_i\},
    \label{eq:prob}
    \end{equation}
    is strictly decreasing in $v_i^b$.

    To do this, rewrite (\ref{eq:prob}) as
    \begin{equation}
        \begin{split}
            \mathbf{P}\{
	    \lambda_i(v_i^a-v_i^x)\geq S^x-S^a\text{ for all }x\neq {a}|U_i=u_i\}
	    =G((\lambda_i(v_i^a-v_i^x))_{x\neq {a}}|u_i),
        \end{split}
        \label{eq:prob1}
    \end{equation}
    where $G(\cdot|u_i)$ is the conditional joint cumulative distribution function of the random vector $(S^x-S^a)_{x\neq{a}}=\left(\sum_{j\neq i}\lambda_j\left(\sigma_j^x(U_j)-\sigma_j^a(U_j)\right)\right)_{x\neq{a}}$ given that $U_i=u_i$. The variable $v_i^b$ affects (\ref{eq:prob1}) only via the $b$-indexed argument 
    $\lambda_i(v_i^a-v_i^b)$.
    The regularity of $(\sigma_j)_{j\neq i}$ implies that the distribution function $G(\cdot|u_i)$ has full support over $[-\sum_{j\neq i}\lambda_j,\sum_{j\neq i}\lambda_j]^{n-1}$, and hence (\ref{eq:prob1}) is strictly decreasing in $v_i^b$ as long as the condition $\lambda_i(v_i^a-v_i^b)\in[-\sum_{j\neq i}\lambda_j,\sum_{j\neq i}\lambda_j]$ is satisfied. The last condition indeed holds for all $v_i^a,v_i^b\in[0,1]$, since we have chosen $i$ so that $\lambda_i\leq\sum_{j\neq i}\lambda_j$.  
	\end{proof}

	The following proposition, combined with Proposition \ref{prop:eff}, establishes Theorem \ref{thm:main}.
	
	\begin{prop}
	A social choice function is weighted utilitarian and Bayesian incentive compatible if and only if it is dictatorial. 
	\label{prop:bic}
	\end{prop}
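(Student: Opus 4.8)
\emph{Plan.} The ``if'' direction is routine: a dictatorial SCF is $\lambda$-weighted utilitarian for $\lambda$ the unit vector on the dictator $d$, and it is Bayesian incentive compatible because the selected alternative depends only on $d$'s report, and $d$ can do no better than obtaining a most-preferred alternative. All the content is in the converse, which I would prove by contraposition. So let $f$ be $\lambda$-weighted utilitarian and Bayesian incentive compatible, and suppose it is not dictatorial. Since the weights are nonnegative and sum to one, there is an agent $i$ with $0<\lambda_i<1$; I will show $f$ violates Bayesian incentive compatibility by exhibiting, for a positive-measure set of types $u_i$, a profitable misreport by $i$ --- the \emph{extremization} of his report at the top.

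\emph{The deviation and why it pays.} For a generic type $u_i$ there is a unique best alternative $x^\ast=x^\ast(u_i)$, and $u_i(x^\ast)<1$; let $v_i$ agree with $u_i$ except that $v_i(x^\ast)=1$. Write $g(x)=\sum_{k\neq i}\lambda_k U_k(x)$ for the other agents' weighted scores. As $w_i$ passes from $u_i$ to $v_i$, the weighted-utilitarian objective $\lambda_i w_i(x)+g(x)$ does not change at any $x\neq x^\ast$ and weakly increases at $x^\ast$; hence, for each realization of $U_{-i}$, the maximizing alternative (unique with probability one by Assumption~\ref{as:dist}) either stays put or becomes $x^\ast$. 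Thus the interim lottery $i$ faces when reporting $v_i$ is obtained from the one he faces when reporting $u_i$ by shifting some probability mass --- possibly none --- onto $x^\ast$ from other alternatives; and because $x^\ast$ is $i$'s strictly best alternative under $u_i$, any shift of positive mass strictly raises his interim payoff. It therefore suffices to show that for a positive-measure set of types $u_i$, the selected alternative genuinely switches to $x^\ast$ with positive conditional probability given $U_i=u_i$. Setting $M:=\max_{x\neq x^\ast}\bigl(\lambda_i u_i(x)+g(x)\bigr)$, a short computation shows the switch occurs exactly when $M-g(x^\ast)\in\bigl(\lambda_i u_i(x^\ast),\,\lambda_i\bigr)$, a nonempty open interval. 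For the positive-measure set of ``non-peaked'' types --- those for which the top value $u_i(x^\ast)$ exceeds the second-largest coordinate of $u_i$ by less than the positive number $(1-\lambda_i)/\lambda_i$ --- one can write down explicit profiles of $U_{-i}$ (let every other agent assign $0$ to $x^\ast$ and a suitable common value to $i$'s second-best alternative) at which $M-g(x^\ast)$ falls in this interval, and Assumption~\ref{as:dist} then yields a whole neighborhood of such profiles occurring with positive conditional probability.

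\emph{The main obstacle.} The delicate step is that being $\lambda$-weighted utilitarian pins $f$ down only up to a Lebesgue-null set, whereas the misreport $v_i$ --- and indeed the image of the map $u_i\mapsto v_i$ --- lies in a null subset of $[0,1]^m$, so $f(v_i,\cdot)$ need not agree with the argmax selection and $\pi_i(f,v_i\,|\,u_i)$ cannot be read off directly. I would circumvent this by replacing the pure misreport with a randomized one: Bayesian incentive compatibility gives $\pi_i(f,u_i\,|\,u_i)\geq\pi_i(f,w_i\,|\,u_i)$ for every $w_i$, hence also after averaging $w_i$ over the uniform distribution on a small box $B_\delta$ around $v_i$ (the interval $[1-\delta,1]$ in coordinate $x^\ast$ and a $\delta$-neighborhood of $u_i(x)$ in each other coordinate). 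This box has positive Lebesgue measure, so by Fubini $f(w_i,\cdot)$ coincides with the argmax selection for almost every $w_i\in B_\delta$, and the induced conditional choice probabilities depend continuously on $w_i$ since ties in the weighted-utilitarian objective have probability zero. Consequently, for $\delta$ small the averaged interim payoff is arbitrarily close to the extremization payoff analyzed above, which by the previous paragraph strictly exceeds $\pi_i(f,u_i\,|\,u_i)$ (itself equal, by Fubini again, to the argmax value for almost every true $u_i$) on a positive-measure set of types. This contradicts Bayesian incentive compatibility, so $\lambda$ must be a unit vector and $f$ is dictatorial.
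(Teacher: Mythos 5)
Your proposal is correct and follows essentially the same route as the paper: the profitable deviation is the same ``extremization'' of the reported payoff of the best alternative to $1$, the same restriction to types whose top two values differ by less than $(1-\lambda_i)/\lambda_i$ guarantees a strictly positive probability of switching the outcome, and the full-support assumption does the same work. The one place you go beyond the paper --- handling the fact that the weighted-utilitarian property pins $f$ down only up to a null set, via averaging the misreport over a small box and Fubini --- is a genuine refinement of a point the paper passes over silently, but it does not change the underlying argument.
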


	\begin{proof}
	Every dictatorial SCF is obviously Bayesian incentive compatible. We use Lemma \ref{lem:dom} to show that if $f$ is a nondictatorial weighted utilitarian SCF, then $f$ is not Bayesian incentive compatible. Let $i$ be an agent to whom $f$ assigns a positive weight not greater than one-half. Assumption \ref{as:dist} implies that the truth-telling strategy $\tau_i$ is not equivalent to the extremized strategy $\hat{\tau}_i$, and that the subprofile $\tau_{-i}$ of truth-telling strategies is regular. Hence, by Lemma \ref{lem:dom}, $i$'s unilateral deviation from the truth-telling strategy profile $(\tau_j)_{j\in I}$ to $\hat{\tau}_i$ makes $i$ strictly better off.  
	\end{proof}

    \section{Interpretation of the theorem in terms of cardinal voting}
    
    This paper focuses on aggregation of cardinal preferences without monetary transfers. Such a situation is typical of voting, where monetary transfers are generally forbidden, and voters have heterogeneous preference intensities. 

A common practice for reflecting preference intensity is to use a cardinal voting rule, in which voters assign a score to each alternative, and the decision is made based on the (weighted) sum of the scores (see, e.g., \citealt{nunez2014preference}). This class includes a wide variety of evaluative voting rules. However, cardinal voting may lead to inefficient outcomes due to strategic behavior of voters. \cite{kikuchi2022winner} pointed out this problem in the context of collective decision making among distinct groups, such as states or parties. In that paper, we considered a game in which the groups collectively decide between two alternatives by cardinal weighted voting. Each group chooses an internal rule that specifies the allocation of weight to the alternatives, as a function of the preferences of group members. We showed that the game is a social dilemma: the weakly dominant strategy for each group is the winner-take-all rule, while the equilibrium is Pareto dominated.

\cite{kikuchi2022winner} considered a fixed cardinal weighted voting rule. Is it possible to solve the social dilemma inherent in cardinal voting by using any decision mechanism? The impossibility theorem (Theorem \ref{thm:main}) provides a negative answer: even if arbitrary social choice functions are allowed, ex ante Pareto efficiency and Bayesian incentive compatibility cannot be achieved simultaneously, except for the case of dictatorship. Moreover, Theorem \ref{thm:main} holds for any number of alternatives, while the result of \cite{kikuchi2022winner} is limited to the binary case.

Behind the impossibility theorem is the incentive of voters in cardinal voting to exaggerate their preference intensity.\footnote{\cite{nunez2014preference} shows that due to such exaggeration behavior of voters, different cardinal voting rules can be strategically equivalent in large elections.} By Proposition \ref{prop:eff}, ex ante Pareto efficiency is equivalent to the use of a weighted utilitarian SCF. The direct mechanism for such an SCF can be seen as weighted cardinal voting where voters assign to each alternative a score in a continuous scale. As shown by Lemma \ref{lem:dom}, voters have an incentive to give extreme scores to their best and worst alternatives, which prevents the implementation of the efficient SCF. This incentive can be most explicitly expressed in the binary case, as Corollary \ref{cor:binary} below shows. In the direct mechanism of an SCF, we say that a strategy $\sigma_i$ for an agent is \textit{weakly dominant} if it weakly dominates all strategies not equivalent to $\sigma_i$.

\begin{cor}
Suppose that the number of alternatives is $m=2$, and consider a weighted utilitarian SCF $f$ such that no agent has more than half of the total weight: $\lambda_i\leq1/2$ for all $i$. In the direct mechanism for the SCF $f$, define (uniquely up to equivalence) a strategy $\sigma^\ast_i$ by, for each type $u_i=(u_i^a,u_i^b)$,
	\[
	\sigma_i^\ast(u_i)=(\sigma_i^{\ast a}(u_i),\sigma_i^{\ast b}(u_i)):=
	\begin{cases}
	(1,0)&\text{if $u_i^a>u_i^b$}\\
	(0,1)&\text{if $u_i^a<u_i^b$}.
	\end{cases}
	\]
 The strategy $\sigma_i^\ast$ is weakly dominant for all agents $i$.
 \label{cor:binary}
\end{cor}

\begin{proof}
    The extremization of any strategy is $\sigma_i^\ast$. By Lemma \ref{lem:dom}, $\sigma_i^\ast$ weakly dominates all strategies not equivalent to it.
\end{proof}

Corollary \ref{cor:binary} is a paraphrase of the result in \cite{kikuchi2022winner} that the winner-take-all rule is the weakly dominant strategy for all groups. 
Corollary \ref{cor:binary} is also an example in which incentive compatibility entails ordinal preference aggregation, which is consistent with Lemma 2 of \cite{AzrieliKim2014}.

\section{Conclusion}

This paper provides an impossibility theorem on Pareto efficiency and Bayesian incentive compatibility in a general class of social choice problems.
The main contribution of the paper lies in the generality of the conditions under which the impossibility holds. 
Most notably, we relaxed the independence assumption imposed in \cite{EHLERS202031}, as well as the binary assumption in \cite{schmitz2012sub}. To the best of our knowledge, this is the first paper that proves the impossibility without imposing preference independence or limiting the number of alternatives.

Applying the model to cardinal voting, we obtain insight that the impossibility theorem elucidates the underlying logic behind the manipulability of cardinal voting in a general case with an arbitrary number of alternatives, as well as the dilemma structure described in \cite{kikuchi2022winner} in the case of binary voting.

Characterizing the second-best mechanism and evaluating the associated welfare loss are natural questions which follow our results. 
Another possible extension would be the consideration of interim or ex post Pareto efficiency, instead of ex ante efficiency.
However, they are beyond the scope of the current paper and we leave them for future research.

\phantomsection
\addcontentsline{toc}{section}{References}
\bibliography{bibbib}

\end{document}